\declaretheoremstyle[
  headfont=\bfseries, 
  bodyfont=\normalfont,
]{boldstyle}
\declaretheorem[style=boldstyle, name=Theorem]{theorem}
\declaretheorem[style=boldstyle, name=Definition]{definition}
\newcommand{\WW}{\mathbf{W}} 
\newcommand{\WF}{\mathbf{W}_F} 
\newcommand{\SF}{\mathbf{S}_F} 
\newcommand{\WI}{\mathbf{W}_I} 
\newcommand{\Entity}{\mathbf{E}} 
\newcommand{\Entitya}{a} 
\newcommand{\Entityb}{b} 
\newcommand{\Predicate}{\mathbf{P}} 
\newcommand{\PredicateR}{R} 
\newcommand{\NumState}{n} 
\newcommand{\State}{\mathbf{s}} 
\newcommand{\Width}{w} 
\newcommand{\Language}{\mathcal{L}} 
\newcommand{\IndexSetPred}{\mathcal{I}_P} 
\newcommand{\IndexSetQSent}{\mathcal{I}_\mathcal{Q}} 
\newcommand{\IndexSetState}{\mathcal{I}_{\SF}} 
\newcommand{\IndexSetConst}{\mathcal{I}_{\mathcal{C}}} 
\newcommand{\ProbMeas}{\mathcal{P}_{I}} 
\newcommand{\ProbLoc}[2]{\mathcal{P}^{#1}_{I, #2}} 
\newcommand{\argmin}{\mathop{\mathrm{arg\,min}}} 
\newcommand{\argmax}{\mathop{\mathrm{arg\,max}}} 
\begin{document}

\title{DISCD: Distributed Lossy Semantic Communication for Logical Deduction of Hypothesis}

\author{Ahmet~Faruk~Saz,~\IEEEmembership{Student Member,~IEEE,}
        Siheng~Xiong,~\IEEEmembership{Student Member,~IEEE,}
        and~Faramarz~Fekri,~\IEEEmembership{Fellow,~IEEE}
}

\maketitle

\begin{abstract}

In this paper, we address hypothesis testing in a distributed network of nodes, where each node has only partial information about the State of the World (SotW) and is tasked with determining which hypothesis, among a given set, is most supported by the data available within the node. However, due to each node's limited perspective of the SotW, individual nodes cannot reliably determine the most supported hypothesis independently. To overcome this limitation, nodes must exchange information via an intermediate server. Our objective is to introduce a novel distributed lossy semantic communication framework designed to minimize each node’s uncertainty about the SotW while operating under limited communication budget. In each communication round, nodes determine the most content-informative message to send to the server. The server aggregates incoming messages from all nodes, updates its view of the SotW, and transmits back the most semantically informative message. We demonstrate that transmitting semantically most informative messages enables convergence toward the true distribution over the state space, improving deductive reasoning performance under communication constraints. For experimental evaluation, we construct a dataset designed for logical deduction of hypotheses and compare our approach against random message selection. Results validate the effectiveness of our semantic communication framework, showing significant improvements in nodes’ understanding of the SotW for hypothesis testing, with reduced communication overhead.

\end{abstract}

\begin{IEEEkeywords}
Logic, semantic, lossy, communication, deduction
\end{IEEEkeywords}

\section{Introduction}

Low latency, minimal power consumption, and efficient bandwidth are critical in today’s hyper-connected systems. Autonomous technologies like self-driving cars and drones depend on real-time data exchange to ensure safety, while smart cities require reliable communication to optimize services. In healthcare, semantic communication enhances data accuracy for remote monitoring, improving patient outcomes. In industry 4.0, it ensures efficient, real-time information flow between machines and sensors, driving automation and advancing manufacturing intelligence. Semantic systems will support fast and explainable decision-making in next-generation applications such as automated truth verification, data connectivity and retrieval, financial modeling, risk analysis, privacy-enhancing technologies, and optimized agriculture \cite{Gund_Yen, Park2022, Hashash}. In these applications, distributed systems may need to exchange data for deduction of a hypothesis  to improve their generalization ability, decision-making performance, or statistical analysis. However, due to privacy concerns and communication limitations, transmitting their entire dataset may not be feasible. In such a case, semantic communication may play an essential role.

Rudolf Carnap's foundational work introduced a model-theoretic approach to semantic communication compatible with First-Order Logic (FOL) to represent states. His framework provided a way to quantify semantic information using inductive probability distributions. Later, Hintikka expanded on Carnap's work by addressing infinite universes and generalization issues \cite{c10}. Early post-5G semantic communication methods like D-JSCC and autoencoders employed neural network-based solutions \cite{yashas, GLi, Bennis2022b, gul-yen, Qin2021SemanticCP, aguerri2019distributed, zaidi2020distributed, xu2020acceleration, gupta2020training, krouka2021communication, Xie2020DeepLE, Shao2022ATO, e26020102, e24060846}. Traditional model-theoretic approaches like Carnap's and Floridi's \cite{floridi} offer interpretability but suffer from computational inefficiency and lack probabilistic foundations. Newer approaches include information expansion and knowledge collusion \cite{e24121842}, knowledge-graph integration enhanced with probability graphs \cite{e26050394}, memory supported deep learning based approaches \cite{Xie2023SemanticCW} and topological approaches \cite{Zhao2022SemanticNativeCA}.  These methods, while effective, are not particularly tailored to FOL reasoning tasks \cite{xiongtilp, xiong2024teilp, yang2024temporal, yang2023harnessing, xiong2024large} such as deduction of a hypotheses.

In this paper, we propose a novel lossy semantic communication framework for a distributed set of nodes, each with a deductive task and partial, possibly overlapping views of the world aiming to identify, among a set of hypotheses, the one best supported by the SotW. Nodes communicate with a central server to enhance their understanding of the SotW, transmitting the most informative messages about their local environments under bandwidth limitations and privacy concerns. The server aggregates data and returns the most informative messages to help nodes refine their local probability distributions over the state space. 

Our semantic encoding algorithm selects messages based on the semantic informativeness of a message to a specific observer within communication constraints. Messages that have higher degree of confirmation (e.g., inductive logical probability) are deemed more informative as they leave less uncertainty about the SotW. In addition, this new paradigm is able to effectively manage large state spaces, maximizing each node's understanding of the SotW while enabling more robust deduction performance despite nodes being unaware of each other's or the central node's full perspective and limited communication resources.  

Finally, we showcase that as the number of communication rounds increases, nodes distribution over the state space converge towards the true distribution, minimizing Bayes risk over hypothesis set. Empirical validation through experiments using a FOL deduction dataset shows that our framework significantly improves nodes' understanding of the world and deduction performance with reduced overhead. 

\section{Background on FOL, Inductive Probabilities, and Semantic Information}

\subsection{Representing the State of the World via First-Order Logic}

We begin by illustrating how the state of the world can be represented using First-Order Logic (FOL). Consider a finite world $ \WF $ characterized by a finite set of entities $\Entity $ and a finite set of predicates $\Predicate$. The set of all possible states of $\WF$ is denoted by \(\SF  = \{ \State_1, \State_2, \dots, \State_\NumState \} \), where \( \NumState = 2^{|\Predicate| \times |\Entity|^2} \). Each state \( \State_i \) provides a complete description of the world using the logical language \( \Language  \), which includes the predicates \( \Predicate \), entities \( \Entity \), quantifiers \( \forall, \exists \), and logical connectives \( \land, \lor, \neg \).

\begin{definition}[State Description]
A \textit{state description} \( \State_i \) in \( \WF \) is defined as:
\begin{equation}
    \State_i = \bigwedge_{r \in \IndexSetPred} \bigwedge_{\Entitya, \Entityb \in \Entity} \delta_{r} \PredicateR_r(\Entitya, \Entityb),
\end{equation}
where \( \delta_{r} \in \{+1, -1\} \) indicates the presence (\( +1 \)) or negation (\( -1 \)) of the predicate \( \PredicateR_r \) for entities \( \Entitya \) and \( \Entityb \), $\IndexSetPred$ is the index set of predicates.
\end{definition}

In this setup, each state is a conjunction (\( \bigwedge \)) of a specific enumeration of all possible predicates (or their negations), \( \PredicateR_r \text{ for } r \in \IndexSetPred\), applied to all ordered pairs of entities, \( (\Entitya, \Entityb) \in \Entity\). Any FOL sentence \( m \) in this finite world can thus be expressed as a disjunction of these state descriptions.

\begin{theorem}[FOL Representation in Finite World  \cite{c2}]
Any FOL sentence \( m \) in \( \WF \) can be represented as:
\begin{equation}
    m \equiv \bigvee_{\substack{i \in \IndexSetState \\ \text{such that } m \models \State_i}} \State_i,
\end{equation}
where $\models$ is the logical entailment operator and $\IndexSetState$ is the index set of finite states.
\end{theorem}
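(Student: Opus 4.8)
The plan is to reduce the first-order sentence $m$ to a Boolean combination of atomic sentences and then invoke the canonical (full) disjunctive normal form, identifying its minterms with the state descriptions of the preceding Definition. First I would make explicit the standing assumptions of the finite-world framework: the domain consists exactly of the named entities in $\Entity$ (domain closure) and distinct names denote distinct individuals (unique names). Under these assumptions every quantifier ranges over the finite set $\Entity$, so I can eliminate quantifiers by applying, inductively from the innermost scope outward, the equivalences $\forall x\, \phi(x) \equiv \bigwedge_{\Entitya \in \Entity} \phi(\Entitya)$ and $\exists x\, \phi(x) \equiv \bigvee_{\Entitya \in \Entity} \phi(\Entitya)$. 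This yields a sentence $m'$, logically equivalent to $m$, that is a combination under $\land,\lor,\neg$ of the finitely many atomic sentences $\PredicateR_r(\Entitya,\Entityb)$ with $r \in \IndexSetPred$ and $(\Entitya,\Entityb) \in \Entity \times \Entity$.

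Second, I would treat each atomic sentence $\PredicateR_r(\Entitya,\Entityb)$ as a propositional variable. There are exactly $|\Predicate|\times|\Entity|^2$ of them, hence $\NumState = 2^{|\Predicate|\times|\Entity|^2}$ truth assignments, and these are in bijection both with the state descriptions $\State_i$ and, under the standing assumptions, with the interpretations of $\Language$: a state description fixes, through its signs $\delta_r \in \{+1,-1\}$, the truth value of every atomic sentence, while each interpretation induces exactly one such sign pattern. Consequently the $\State_i$ are mutually exclusive and jointly exhaustive, so in any interpretation exactly one state description holds.

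Third, I would close the argument by a truth-value comparison valid in every intended interpretation. Fix an interpretation and let $\State_i$ be its unique true state description; since $\State_i$ pins down the truth value of every atom appearing in $m'$, the sentence has a determinate value there, and I write $\State_i \models m$ when that value is true. Each such $\State_i$ entails $m$ (its unique model satisfies $m$), so the disjunction $\bigvee_{i:\,\State_i \models m}\State_i$ entails $m$; conversely, if the interpretation satisfies $m$, then its true state description is one of the disjuncts by exhaustiveness, so $m$ entails the disjunction. Hence $m \equiv \bigvee_{i:\,\State_i \models m}\State_i$, with the unsatisfiable case corresponding to the empty (conventionally false) disjunction.

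The main obstacle is not the propositional normal-form step, which is routine, but the legitimacy of the quantifier elimination: it holds only because of domain closure and unique names, so the statement is an equivalence within the class of intended finite interpretations rather than a validity of pure first-order logic, and I would flag these assumptions explicitly. I would also note that the correct relation indexing the disjunction is the satisfaction direction $\State_i \models m$, since a state description is maximally strong and a nontrivial $m$ cannot entail any single $\State_i$.
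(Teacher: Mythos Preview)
The paper does not prove this theorem; it is simply stated with a citation to \cite{c2} (Carnap), so there is no in-paper argument to compare against. Your proposal is the standard Carnapian route---quantifier elimination over a finite named domain followed by full DNF with state descriptions as minterms---and it is correct under the domain-closure and unique-names assumptions you rightly insist on making explicit.

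Your closing remark about the entailment direction is also well taken: as written in the paper, the index condition $m \models \State_i$ is backwards. A state description $\State_i$ is a complete theory of the finite structure, so a nontrivial $m$ that is compatible with more than one state cannot entail any single $\State_i$; the intended condition is $\State_i \models m$ (equivalently, ``$m$ holds in the state $\State_i$''). The same reversal appears later in Theorem~2 and in the worked example with constituents. This does not affect your argument, which uses the correct direction throughout, but it is worth flagging if you are writing this up independently.
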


When extending to an infinite world \( \WI \) where entities \( \Entity \) becomes a countably infinite set, a more intricate approach is needed. We introduce the concept of \textit{Q-sentences} to handle this complexity.

\begin{definition}[Q-sentence]
A \textit{Q-sentence} \( Q_i(\Entitya, \Entityb) \) in \( \WI \) is defined as:
\begin{equation}
    Q_i(\Entitya, \Entityb) = \bigwedge_{r \in \IndexSetPred} \bigwedge_{\Entitya, \Entityb \in \Entity} \delta_{r} R_r(\Entitya, \Entityb) \land \delta'_{r} R_r(\Entityb, \Entitya),
\end{equation}
where \( \delta_{r}, \delta'_{r} \in \{+1, -1\} \) indicate the presence or negation of predicates. This formulation encompasses all combinations, resulting in \( 4^{|\Predicate| \times |\Entity|^2} / 2\) distinct Q-sentences, i.e., $i = 1, ...,  4^{|\Predicate| \times |\Entity|^2} / 2$.
\end{definition}

Building upon Q-sentences, we define \textit{attributive constituents} to capture the relationships of individual entities.

\begin{definition}[Attributive Constituent]
An \textit{attributive constituent} \( C_t(x) \) in \( \WI \) is:
\begin{equation}
    C_t(x) = \bigwedge_{i \in \IndexSetQSent} \delta_{k} \{(Q_i(x, b)): \exists b\},
\end{equation}
where \( \delta_{i} \in \{+1, -1\} \), $\IndexSetQSent$ is the index set over Q-sentences, and $i = 1, ... |\IndexSetQSent|$, and $\Entityb \in \Entity$. The expression encapsulates all relationships entity \( x \) has within the world, expressing \textit{what kind of individual} $x$ is.
\end{definition}

To represent the entire infinite world, we define \textit{constituents} as combinations of attributive constituents.

\begin{definition}[Constituent]
A \textit{constituent} \( C^\Width \) of width \( \Width \) in \( \WI \) is a conjunction of attributive constituents:
\begin{equation}
    C^\Width = \bigwedge_{t=1}^{\Width} \; \{ C_t(x): \exists x\},
\end{equation}
which represents the relational structure of entities in the world \( \WI \). Here, the width \( \Width \) is the number of attributive constituents \( C_t(x) \) present in the conjunction forming \( C^\Width \).
\end{definition}

Any FOL sentence \( m \) in \( \WI \) can then be expressed as a disjunction of such constituents.

\begin{theorem}[FOL Representation in Infinite World \cite{HINTIKKA196548}]
Any FOL sentence \( m \) in \( \WI \) can be represented as:
\begin{equation}\label{const_dis}
    m \equiv \bigvee_{\substack{j \in \IndexSetConst \\ \text{such that } m \models C^j}} C^j,
\end{equation}
where $\IndexSetConst$ is the index set of constituents.
\end{theorem}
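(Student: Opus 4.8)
The plan is to recognize this as Hintikka's distributive normal form theorem and to prove it by induction on the quantificational depth $d$ of the sentence $m$ (the maximum number of nested quantifiers). The backbone of the argument is a structural fact about constituents: for each fixed depth $d$, the constituents $C^\Width$ of depth $d$ form a \emph{finite} set that is pairwise mutually exclusive and jointly exhaustive, so that every model of $\WI$ satisfies exactly one depth-$d$ constituent. Granting this, any depth-$d$ sentence $m$ is semantically equivalent to the disjunction of precisely those constituents compatible with it, which is exactly the representation claimed in \eqref{const_dis}; the index set $\IndexSetConst$ is then the finite collection of depth-$d$ constituents singled out by $m$.

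For the base of the induction I would treat the quantifier-free layer first, mirroring the already-established finite-world result (the theorem on FOL representation in $\WF$). A quantifier-free formula in finitely many free variables is a Boolean combination of atomic predicates applied to those variables; distributing into disjunctive normal form expresses it as a disjunction of complete conjunctions, which are exactly the Q-sentences $Q_i(\Entitya,\Entityb)$. Thus the atomic level already admits a distributive normal form over Q-sentences, and the attributive constituents $C_t(x)$ are recovered by specifying, for a single free variable $x$, which Q-sentences $x$ realizes with some witness $b$.

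The inductive step is where the real work lies. Given $m$ of depth $d+1$, I would peel off the outermost quantifier layer and apply the inductive hypothesis to the matrix, now carrying one extra free variable $x$, rewriting it as a disjunction of depth-$d$ attributive constituents $C_t(x)$. The decisive move is then to push the quantifier over $x$ through this disjunction and observe that a statement about which $C_t(x)$ are realized collapses into an enumeration of exactly which attributive constituents are instantiated in the world, together with the closure assertion that no other kind of individual occurs. That pairing --- a list of ``kinds that exist'' plus the exhaustiveness clause --- is precisely a constituent $C^\Width$ of the corresponding width, so the depth-$(d+1)$ sentence becomes a disjunction of such constituents.

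I expect the main obstacle to be this inductive step, and within it two points in particular. First, verifying that the constituent set at each depth is genuinely finite and that the constituents are mutually exclusive and jointly exhaustive over the infinite universe $\WI$: exhaustiveness is the crux, since it is what upgrades the trivial direction (each disjunct entails $m$) to the full equivalence, by forcing every model of $m$ to realize some constituent that in turn entails $m$. Second, handling satisfiability: unlike the finite-world case, some depth-$d$ constituents are inconsistent and must be dropped from $\IndexSetConst$, and --- by Hintikka's own observation --- consistency at increasing depth is not in general decidable, so the equivalence must be stated in terms of the constituents actually compatible with $m$. Mutual exclusivity then renders the resulting disjunction canonical once the inconsistent constituents are removed.
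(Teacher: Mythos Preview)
The paper does not actually prove this theorem: it is stated with a citation to Hintikka's original paper and no argument is supplied in the manuscript. Your proposal is therefore not competing with any in-paper proof. What you have sketched is essentially Hintikka's own argument --- induction on quantifier depth, with the key structural fact that the depth-$d$ constituents form a finite, mutually exclusive, and jointly exhaustive partition of the model class, so that any depth-$d$ sentence is equivalent to the disjunction of those constituents it is compatible with. That is the standard route to the distributive normal form, and the two technical caveats you flag (exhaustiveness over the infinite domain, and the need to discard inconsistent constituents whose consistency is in general undecidable) are exactly the ones Hintikka himself addresses. So your plan is correct and aligned with the cited source; there is simply nothing in the present paper to compare it against.
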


As an illustration, consider the statement "Every person owns a book.", or more explicitly, "For every x, if x is a person, then there exists a y such that y is a book and x owns y.", formally expressed as:
\begin{equation}
    m = \forall x \left( Person(x) \rightarrow \exists y \left( Book(y) \land Owns(x, y) \right) \right).
\end{equation}

\noindent To represent \( m \) as a disjunction of constituents, we identify all constituents \( C^{j'} \) where \( m \) holds. In each such constituent, for every entity \( x \) satisfying \( Person(x) \), the attributive constituent \( C'_t(x) \) must include:
\[
    C'_t(x) = 
    \begin{cases}
        \left[ Person(x) \land \exists y \left( Book(y) \land Owns(x, y) \right) \right] \land \\
        \left[ \bigwedge_{i \in \IndexSetQSent} \delta_{i} \{(Q_i(x, b)): \exists b\} \right],
        \textbf{ if } Person(x) \equiv 1, \\
        \\
         \left[  \lnot Person(x) \right] \land \bigwedge_{i \in \IndexSetQSent} \delta_{i}  \{(Q_i(x, b)): \exists b\} , otw.
    \end{cases}
\]

\noindent Each constituent \( C^{j'} \) is then constructed as:

\begin{equation}
    C^{j'} = \bigwedge_{t=1}^{j'} \{ C_t'(x): \exists x\},
\end{equation}
ensuring that all entities are accounted for. Therefore, \( m \) can be expressed as:
\begin{equation}
    m \equiv \bigvee_{\substack{j' \in \IndexSetConst \\ \text{such that } m \models C^{j'}}} C^{j'},
\end{equation}

State descriptions and constituents act like basis vectors in a vector space (as they partition the space and hence any two are mutually exclusive), allowing us to define a probability measure over them. The inductive logical probability of any sentence in the FOL language \( \mathcal{L} \) is then the sum of the probabilities of its constituents as per axioms of probability. In the next section, we will discuss how to define this probability measure using the frameworks proposed by Carnap and Hintikka.

\subsection{Inductive Logical Distribution on States}

In FOL-based world \( \WW \text{ (could be } \WF \text{ or } \WI) \), we introduce a probability measure \( \ProbMeas \) over the fundamental elements—either state descriptions \( \State_i \) or constituents \( C^j \), collectively referred to as "world states" \( S_i \). Let $S_i \in \mathcal{S}$ where $\mathcal{S}$ is the state space. To define this measure, we make use of \textit{inductive logical probabilities}, which serve as inductive Bayesian posteriors, built upon empirical observations and a prior distribution. 

\begin{definition}[Inductive Logical Probability]
For any state \( S_i \) and corresponding evidence \( e \) (observations) in \( \WW \), the inductive logical probability (a.k.a., degree of confirmation) \( c(S_i, e) \) that evidence \( e \) supports sentence \( S_i \) is defined as:
\begin{equation}\label{c_func}
c(S_i, e) = p(S_i \mid e) = \frac{p(S_i \land e)}{p(e)} = \frac{l_{S_i} + \lambda(w_{S_i})/w_{S_i}}{l + \lambda(w_{S_i})},
\end{equation}
where \( l \) is the number of observations, \( l_{S_i} \) is the number of observations confirming state \( S_i \), \( w_{S_i} \) is the weight assigned to state \( S_i \), \( \lambda(w_{S_i}) \) is the prior coefficient, dependent on \( w_{S_i} \). The parameter \( \lambda(w_{S_i'}) \), ranging from 0 to \( \infty \), balances the weight of prior against empirical data. 
\end{definition}

Then, the inductive logical probability measure \( \ProbMeas \) over $\mathcal{S}$ is defined as $\ProbMeas = \{c(S_i, e) \mid i \in \mathcal{I}_{\mathcal{S}}\}$ where $\sum_{i \in \mathcal{I}_{\mathcal{S}}} c(S_i, e) = 1$. In the following section, we make use of inductive logical probabilities to define a semantic information metric to effectively quantify semantic informativeness regarding states of the world.

\subsection{Measuring Semantic Uncertainty about the SotW}

To quantify the remaining uncertainty after new observations, Carnap proposed the concept of "cont-information". This metric assesses how observations influence an observer's understanding of the SotW by evaluating the extent to which they decrease observer's uncertainty.

\begin{definition}
The \textit{cont-information} in a world \( \WW \) quantifies the informativeness of an observation \( e \) to a specific observer by measuring the remaining uncertainty about the SotW.  Given a state \( S_i \) and evidence \( e \) (observations) in \( \WW \), cont-information is defined as:
\begin{equation}
\text{cont}(S_i; e) = 1 - c(S_i, e) = 1 - p(S_i | e),
\end{equation}
\end{definition}

In simple terms $\text{cont}(S_i; e)$ measures the uncertainty remaining about state $S_i$ after observing e. Next, the cont-information measure is used to devise a communication framework. This measure captures the dynamic and cumulative nature of evidence accumulation in the communication and, as a result, the refinement in an observer’s understanding of the world state.
 
\section{Distributed Semantic Communication Framework}

\begin{figure}[!t]
\centering
\includegraphics[width=0.4\textwidth]{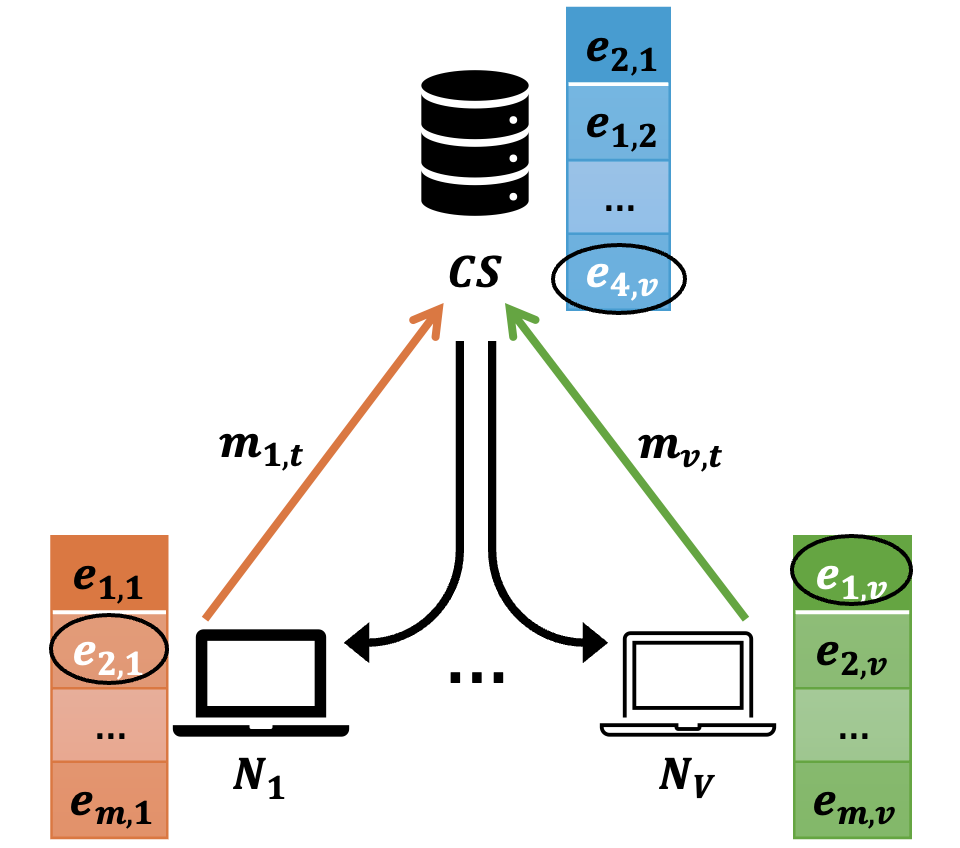}
\captionsetup{justification=centering}
\caption{Distributed semantic communication framework. Nodes exchange most semantically informative messages.}
\label{fig:my}
\end{figure}

\subsection{Hypothesis Deduction Problem}

We start by formalizing a hypothesis deduction problem at each user node \( N_j \) in the wireless network. Let \( \mathcal{H}_j = \{ h_i \mid i \in \mathcal{I}_\mathcal{H} \} \) be a finite or infinite set of self-consistent statements representing possible hypotheses about the world \( \WW \) at node $N_j$. There exists a hypothesis \( h^* \in \mathcal{H} \)  which is most consistent, among all other hypothesis, with the actual SotW. The statements \( h_i \) can be represented by a non-empty disjunction of state descriptions or constituents, as per Theorem 1 and 2. At each node \( N_j \), the task is to deduce the hypothesis $h_j^*$ (the most consistent hypothesis with the SotW). However, as each node has incomplete information, this deduction cannot be concluded reliably. The aim of the communication is, then, to receive most cont-informative information regarding world state so as more accurately determine which hypothesis is supported the most by world (entity) population.

\subsection{System Model}

As shown in Fig.1, we consider a distributed network comprising a set of edge nodes \( \mathcal{N} = \{ N_1, N_2, \dots, N_v \} \) and a central server \( CS \). Each edge node possesses evidence \( e_j \) obtained through observations of the world \( \WW \), which can be finite (\( \WF \)) or infinite (\( \WI \)). Each node \( N_j \) maintains a local inductive probability distribution \( \ProbLoc{j}{t} \) over the state space \( \mathcal{S} \), reflecting its view of the State of the World (SotW) at round \( t \). Each node \( N_j \) aims to solve a hypothesis deduction problem \( \mathcal{B}_j \), involving determining the hypothesis \( h_j^* \) most supported by evidence from a set of possible hypotheses. Since the nodes have incomplete observations about SotW, they wish to collaborate with each other through a central server to share some information with each other to improve each node’s local distribution over states. However, due to communication constraints, every node can transmit only a limited amount of information to the server in each communication round. The central server \( CS \) aggregates messages from all nodes to update its own perception of the SotW, denoted by \( \ProbLoc{CS}{t} \). The server then selects a message and broadcasts it back to the nodes to aid them in refining their local distributions. In the next section, we turn our attention to, for each round t of communication: (i) how a node chooses its message for transmission to the server, (ii) how the server updates its perceived SotW upon receiving all incoming messages from the nodes, (iii) how the server selects which message to broadcast to all nodes, (iv) how a node updates it’s perceived SotW upon receiving the message from the server, and (v) how the hypothesis deduction problem is solved.

\subsection{Communication Protocol}

The communication occurs in iterative rounds, each consisting of two phases: an \textit{uplink phase} and a \textit{downlink phase}.

\paragraph{Uplink Phase} In round \( t \), each node \( N_j \) selects a message \( m^*_{j,t} \) (which is in the form of a FOL sentence) to transmit to the server \( CS \). The message \( m^*_{j,t} \) is chosen to maximize the semantic content informativeness with respect to the node's observations \( e_j \), while considering previously received messages \( \{m_{cs, z}\}_{z=1}^{t-1} \) from the server and previously transmitted messages to the server \( \{m_{j, z}\}_{z=1}^{t-1} \). The optimization objective for node \( N_j \) becomes:

\begin{equation}\label{cont1}
m_{j, t}^* = \argmin_{m \notin \bigcup_{z = 1}^{t-1} m_{cs, z} \land m_{j, z}} \text{cont}(e_j \land \bigcup_{z = 1}^{t-1} m_{cs, z}, m) \quad \text{s.t. } |m| \leq B,
\end{equation}

\noindent where $|m|$ is the number of FOL sentences message $m$ contains.

\paragraph{Downlink Phase} Upon receiving messages \( \{ m_{j, t} \}_{j=1}^{v} \) from all nodes, the server \( CS \) updates its distribution on state space \( \ProbLoc{CS}{t} \) based on the aggregated information:
\[
\ProbLoc{CS}{t} = \text{Update}(\ProbLoc{CS}{t-1}, \{ m^*_{j, t} \}_{k=1}^K),
\]
where \textit{Update} function involves Bayesian updating (c.f., eqn. \ref{c_func} and \ref{eq:posterior}) to incorporate new evidence and adjust probabilities over \( \mathcal{S} \).

The server then selects a message \( m_{CS, t} \) to broadcast, containing information not previously sent out:

\begin{equation}\label{cont2}
m_{CS, t}^* = \argmin_{m \notin \bigcup_{z = 1}^{t-1} m_{cs, z}} \text{ cont}(\bigcup_{z = 1}^{t} \{m^*_{j, z} \}_{j=1}^{V}, m) \quad \text{s.t. } |m| \leq B,
\end{equation}

\noindent while considering previously broadcasted messages from server \( \{m_{cs, z}\}_{z=1}^{t-1} \). Upon receiving \( m_{CS, t}^* \), node \( N_j \) updates its distribution:

\[
\ProbLoc{j}{t} = \text{Update}(\ProbLoc{j}{t-1}, m_{CS, t}^*).
\]

This iterative process allows nodes to progressively refine their understanding of the SotW. Once the communication concludes, nodes perform deduction.

\paragraph{Hypothesis Deduction}

After $T$ rounds of communication, each node $N_j$ choose the hypothesis best supported by evidence as:

\begin{equation}
    h_j^* = \argmax_{h} c(h, e_j \land \bigcup_{z = 1}^{T} m_{cs, z})
\end{equation}

In the next section, we turn our attention to the theoretical properties of this communication scheme.

\subsection{Convergence Analysis Under Finite Evidence}

In this section, we examine the convergence behavior of the communication framework under finite evidence to determine the relationship between empirical and true distribution over state space $\mathcal{S}$. To preserve generality, we assume an infinite universe $\WI$ and conduct the analysis over constituents. 

Let \( e_{j, l}^c \) describe a finite sample of \( l \) individual observations, containing \( c \) distinct attributive constituents accumulated at node $N_j$ after $T$ rounds of communication. A constituent \( C^w \) of width \( w \) is compatible with \( e_{j, l}^c \) only if \( c \leq w \leq K \), where \( K = 4^{|P| \times |E|^2} / 2\).

To compute the posterior distribution  

\begin{equation}\label{ProbLoc}
    \ProbLoc{j}{T} = \{c(C^w, e_{j, l}^c) \mid c \leq w \leq K\},
\end{equation}

one needs to specify the prior $c(C^w, \emptyset)$ and the likelihood $c(e_{j, l}^c, C^w)$. A natural method to assign the prior \( c(C^w, \emptyset) = P(C^w) \) is to assume it is proportional to \( \left( \frac{w}{K} \right)^\alpha \), reflecting the assumption that there are \( \alpha \) individuals compatible with a particular constituent \( C^w \). This approach captures the intuition that constituents with larger widths (i.e., involving more attributive constituents) are more probable if we believe there are more individuals exemplifying them. The explicit equation for the prior \( P(C^w) \) is:

\begin{equation} \label{eq:prior}
c(C^w, \emptyset) = P(C^w) = \frac{\Gamma\left( \alpha + \frac{w}{K} \right)}{\Gamma\left( \frac{w}{K} \right)} \bigg/ \sum_{i=0}^{K} \binom{K}{i} \frac{\Gamma\left( \alpha + \frac{i}{K} \right)}{\Gamma\left( \frac{i}{K} \right)},
\end{equation}
where \( \Gamma(\cdot) \) is the gamma function, and \( \alpha \geq 0 \) is a parameter reflecting our prior belief about the number of individuals compatible with \( C^w \).

The likelihood \( P(e_{j, l}^c \mid C^w) \) is computed based on the number \( l_z \) of observed samples confirming each attributive constituent \( C_z(x) \) in the evidence \( e_{j, l}^c \). It is given by:
\begin{equation} \label{eq:likelihood}
c(e_{j, l}^c, C^w) = P(e_{j, l}^c \mid C^w) = \frac{\Gamma(\lambda(w))}{\Gamma(l + \lambda(w))} \prod_{z=1}^{c} \frac{\Gamma\left( l_z + \frac{\lambda(w)}{w} \right)}{\Gamma\left( \frac{\lambda(w)}{w} \right)},
\end{equation}
where \( \lambda \) is a prior coefficient, possibly dependent on \( w \). This likelihood reflects how well constituent \( C^w \) explains the observed evidence \( e_{j, l}^c \).

Using the prior and likelihood, the posterior probability of constituent \( C^w \) given \( e_{j, l}^c \) is:
\begin{equation} \label{eq:posterior}
c(C^w, e_{j, l}^c) = P(C^w \mid e_{j, l}^c) = \frac{P(C^w) \, P(e_{j, l}^c \mid C^w)}{\sum_{w' \in \IndexSetConst} P(C^{w'}) \, P(e_{j, l}^c \mid C^{w'})},
\end{equation}
where the sum in the denominator runs over all constituents compatible with the language and $\IndexSetConst$ is the index set of constituents.

Using the prior in equation (\ref{eq:prior}) and the likelihood in equation (\ref{eq:likelihood}), we can analyze the convergence of the posterior probability \( P(C^w \mid e_{j, l}^c) \) as the number of observations \( l \) increases. Specifically, these equations are used to derive a Probably Approximately Correct (PAC) bound that determines the minimum number of samples required to consider the minimal constituent \( C^c \) to be approximately correct with high probability. (Despite omitted from this manuscript due to space concerns, it had been proven that asymptotically, the constituent with smallest width, a.k.a. the minimal constituent, receives probability 1 while all other constituents receive probability 0. See \cite{HINTIKKA19661} for further discussion and proof.)

\begin{theorem}[PAC-Bound for Constituents \cite{HINTIKKA19661}] \label{PAC_bound}
Given evidence \( e_{j, l}^c \) of size \( l \), let \( l_0 \) be such that \( P(C^c \mid e_l^c) \geq 1 - \epsilon \) for some \( l \geq l_0 \). Then:
\begin{equation} \label{err_bound}
\epsilon' \leq \max_{0 \leq c \leq K - 1} \left\{ \sum_{i=1}^{K - c} \binom{K - c}{i} \left( \frac{c}{c + i} \right)^{l - \alpha} \right\},
\end{equation}
where \( \epsilon' = \frac{\epsilon}{1 - \epsilon} \).
\end{theorem}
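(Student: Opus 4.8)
The plan is to recast the target inequality (\ref{err_bound}) as a bound on the posterior odds against the minimal constituent $C^c$. Writing $1-\epsilon = P(C^c \mid e_{j,l}^c)$, so that $\epsilon' = \epsilon/(1-\epsilon)$ is exactly these posterior odds, and using the posterior (\ref{eq:posterior}), it suffices to control
\[
\epsilon' = \frac{1 - P(C^c \mid e_{j,l}^c)}{P(C^c \mid e_{j,l}^c)} = \frac{\sum_{w' \ne c} P(C^{w'})\,P(e_{j,l}^c \mid C^{w'})}{P(C^c)\,P(e_{j,l}^c \mid C^c)},
\]
where the numerator sum runs over all constituents compatible with the evidence. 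A constituent narrower than $c$ cannot contain the $c$ observed attributive constituents, so only widths $w' = c+1,\dots,K$ survive, matching the compatibility range $c \le w \le K$.

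First I would group the compatible constituents by width. A compatible constituent of width $w'=c+i$ is obtained by adjoining $i$ of the $K-c$ unobserved attributive constituents to the $c$ observed ones, so there are exactly $\binom{K-c}{i}$ of them, and by the forms (\ref{eq:prior}) and (\ref{eq:likelihood}) they all share the same prior $P(C^{c+i})$ and the same likelihood $P(e_{j,l}^c \mid C^{c+i})$ (both depend only on the width, given the counts $l_z$). Hence
\[
\epsilon' = \sum_{i=1}^{K-c} \binom{K-c}{i}\, \frac{P(C^{c+i})\,P(e_{j,l}^c \mid C^{c+i})}{P(C^c)\,P(e_{j,l}^c \mid C^c)},
\]
which already exhibits the binomial coefficient of (\ref{err_bound}); the remaining work is to bound the per-width ratio by $(c/(c+i))^{l-\alpha}$.

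The core computation is this ratio. The prior (\ref{eq:prior}) is, up to the common normalizer, proportional to $(w/K)^\alpha$, so its contribution is $((c+i)/c)^\alpha = (c/(c+i))^{-\alpha}$; rigorously this amounts to bounding the Gamma ratio $[\Gamma(\alpha+\tfrac{c+i}{K})/\Gamma(\tfrac{c+i}{K})]/[\Gamma(\alpha+\tfrac{c}{K})/\Gamma(\tfrac{c}{K})]$ by $((c+i)/c)^\alpha$. The likelihood contribution, from (\ref{eq:likelihood}), must be bounded by $(c/(c+i))^l$: intuitively, a width-$(c+i)$ model is forced to place all $l$ observations inside the $c$ observed cells, an event whose probability scales like $(c/(c+i))^l$, which is the source of the base $c/(c+i)$. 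Multiplying the two parts yields the per-width bound $(c/(c+i))^{l-\alpha}$. I expect the main obstacle to be the rigorous likelihood estimate: extracting the clean base $c/(c+i)$ from the products of $\Gamma(l_z + \lambda(w)/w)$ together with the $\Gamma(l+\lambda(w))$ factors requires log-convexity and monotonicity bounds on ratios of Gamma functions and the explicit dependence of $\lambda(w)$ on $w$, and it must hold uniformly over the unknown allocation $l_1,\dots,l_c$ of the $l$ observations among the $c$ observed cells.

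Finally I would assemble the pieces: substituting the per-width bound gives $\epsilon' \le \sum_{i=1}^{K-c}\binom{K-c}{i}(c/(c+i))^{l-\alpha}$ for the particular $c$ realized by the evidence. Since $c$ may take any value in $\{0,1,\dots,K-1\}$ (a nontrivial minimal constituent requires $c \le K-1$), replacing the right-hand side by its maximum over $c$ produces the uniform bound (\ref{err_bound}). Monotone decrease of each summand in $l$ then certifies the existence of the claimed threshold $l_0$ beyond which $P(C^c \mid e_{j,l}^c) \ge 1-\epsilon$.
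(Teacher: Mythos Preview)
The paper does not actually supply a proof of this theorem; it is cited from Hintikka and only the statement is reproduced. The closest the paper comes to an argument is in the proof of Theorem~4, where it records the intermediate inequality
\[
\epsilon' \le \sum_{i=1}^{K-c}\binom{K-c}{i}\,\frac{P(e\mid C^{c+i})\,P(C^{c+i})}{P(e\mid C^{c})\,P(C^{c})},
\]
which is exactly the posterior-odds decomposition you derive in your first two steps. Your proposal therefore matches the paper's (implicit) route and goes further: you supply the grouping-by-width counting argument behind the binomial coefficients, separate the prior and likelihood contributions to obtain the base $(c/(c+i))^{l-\alpha}$, and justify the outer $\max_c$. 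You also correctly flag the one nontrivial analytic step the paper never addresses, namely the uniform Gamma-ratio bound needed to turn the likelihood ratio from (\ref{eq:likelihood}) into $(c/(c+i))^l$. In short, your outline is both consistent with and considerably more complete than what the paper offers.
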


This PAC bound provides a theoretical guarantee on the convergence of the posterior distribution $\ProbLoc{j}{T}$ to true distribution on the state space $\mathcal{S}$ as more evidence is accumulated. The detailed derivation of this bound relies on the prior and likelihood expressions, as also discussed in the next section. Now, we turn our attention to showing that DISCD is superior compared to random transmissions.

\subsection{Effectiveness of Proposed Framework}

In this section, we compare two communication strategies under sentence level communication constraints, from the perspective of a node $N_j$, as detailed by the next theorem.

\begin{theorem}[Effectiveness of DISCD Framework]\label{thm:cont-informative-revised}
Given a fixed communication budget allowing the transmission of \( B \) FOL sentences over the entire course of communication (e.g., in $T$ rounds) by the central server, consider the following two strategies:

1. \emph{DISCD Message Selection}: Let $e_{j, cont}$ constitute the \( B \) sentences with the highest degree of confirmation that were selected and broadcast by central server during communication.

2. \emph{Random Message Selection}: Let $e_{j, rand}$ constitute \( B \) sentences selected uniformly at random and broadcast by central server during communication.

Assume the minimal constituent $C^c$ corresponds to the true state of the world. We establish the following theorem.

\begin{enumerate}[label=(\alph*)]
    \item The posterior probability of the minimal constituent satisfies
    \begin{equation}\label{eq:posterior-revised}
        c(C^c , e_{j, cont}) \geq c(C^c, e_{j, rand}).
    \end{equation}
    \item The evidence \( e_{j, cont} \) yields a more accurate posterior distribution $\ProbLoc{j, cont}{T} $ over constituents and yields a tighter PAC bound (as given by Theorem 3) compared to that of \( e_{j, rand} \), e.g., $\epsilon'_{cont} \leq \epsilon'_{rand}$.
\end{enumerate}
\end{theorem}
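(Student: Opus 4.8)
The plan is to collapse the entire statement onto a single monotonicity property of the posterior \eqref{eq:posterior} in the degree of confirmation of the admitted sentences; once that is in hand, part (b) falls out of part (a) almost mechanically. I would first fix notation and write, for any admitted evidence set $e$, the posterior mass of the true (minimal) constituent as
$$c(C^c, e) = \frac{P(C^c)\,P(e\mid C^c)}{\sum_{w'\in\IndexSetConst} P(C^{w'})\,P(e\mid C^{w'})}.$$
Since the priors $\{P(C^{w'})\}$ are identical across the two strategies, $c(C^c,e)$ is controlled entirely by the likelihood ratios $P(e\mid C^c)/P(e\mid C^{w})$ for the wider, false constituents $w>c$. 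Using the explicit gamma-form likelihood \eqref{eq:likelihood}, adding one observation that confirms a component $z$ of $C^c$ multiplies $P(e\mid C^{w})$ by the factor $\bigl(l_z+\lambda(w)/w\bigr)/\bigl(l+\lambda(w)\bigr)$; I would show this factor is larger for the narrow true constituent $C^c$ than for any $C^{w}$ with $w>c$ (for constant $\lambda$ this is immediate, since $\lambda/c>\lambda/w$), so every confirming observation steepens the ratio in favour of $C^c$.

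Next I would build the bridge from a sentence's \emph{degree of confirmation} to its effect on these counts. Because $C^c$ is assumed to describe the true SotW, the sentences carrying the highest degree of confirmation $c(m,e)=p(m\mid e)$ are precisely those best supported by the accumulated truthful evidence, hence those consistent with $C^c$; admitting them raises the confirmation counts $l_z$ of $C^c$'s components without introducing spurious attributive constituents that would force a wider $C^{w}$ and inflate its likelihood. This yields the key lemma: between any two collections of $B$ sentences, the one with the larger aggregate degree of confirmation produces the larger posterior $c(C^c,\cdot)$. Since $e_{j,cont}$ is by definition the top-$B$ collection by degree of confirmation while $e_{j,rand}$ is an arbitrary $B$-subset, the lemma gives inequality \eqref{eq:posterior-revised} for every realisation of the random draw, and a fortiori in expectation, settling part (a).

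For part (b) I would need no new machinery. The mass the posterior places on the true constituent is exactly the accuracy of $\ProbLoc{j}{T}$ relative to its limiting point mass on $C^c$, so (a) already certifies that $e_{j,cont}$ yields the more accurate distribution over constituents. For the PAC bound, the smallest admissible $\epsilon$ in Theorem~\ref{PAC_bound} is $\epsilon = 1-c(C^c,e)=\text{cont}(C^c;e)$; inequality \eqref{eq:posterior-revised} therefore gives $\epsilon_{cont}\le\epsilon_{rand}$, and since $\epsilon'=\epsilon/(1-\epsilon)$ is strictly increasing on $[0,1)$ we conclude $\epsilon'_{cont}\le\epsilon'_{rand}$, closing the theorem.

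The step I expect to be the real obstacle is the bridge in the second paragraph: turning the informal implication ``high confirmation $\Rightarrow$ reinforces components of $C^c$ rather than of wider constituents'' into a rigorous monotonicity of the gamma-ratio likelihood. This requires controlling how $\lambda(w)/w$ and the count profile $\{l_z\}$ interact across constituents of different width, and some care is needed so that the comparison against random selection is genuinely pointwise and not merely in expectation, since an unlucky random draw could in principle concentrate on $C^c$ just as well as the DISCD choice.
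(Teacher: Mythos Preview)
Your strategy coincides with the paper's: both reduce part~(a) to the claim that the likelihood ratios $P(e\mid C^c)/P(e\mid C^{w})$ for $w>c$ are larger under the high-confirmation evidence than under random evidence, and both treat part~(b) as essentially a corollary. The paper's argument is considerably less detailed than yours---it simply asserts that $e_{j,cont}$ ``aligns closely'' with $C^c$ and therefore yields more discriminative likelihoods, without invoking the explicit gamma form~\eqref{eq:likelihood} or the per-observation multiplicative factors you isolate. For part~(b) the routes diverge slightly: the paper rewrites the PAC bound in the likelihood-ratio form
\[
\epsilon' \;\le\; \sum_{i=1}^{K-c}\binom{K-c}{i}\,\frac{P(e\mid C^{c+i})\,P(C^{c+i})}{P(e\mid C^{c})\,P(C^{c})}
\]
and argues that each summand shrinks under $e_{j,cont}$, whereas your route via $\epsilon=1-c(C^c,e)$ together with the monotonicity of $\epsilon\mapsto\epsilon/(1-\epsilon)$ is more direct. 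The obstacle you flag in your final paragraph---that an unlucky random draw could in principle land on the same $C^c$-confirming sentences as DISCD, so the inequality cannot hold pointwise for every realisation---is real, and the paper does not resolve it either; its proof is at the same informal level on this point.
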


\begin{proof}

The posterior distributions $\ProbLoc{j, cont}{T}$ and $\ProbLoc{j, rand}{T}$ are given by (\ref{ProbLoc}) and (\ref{eq:posterior}). The likelihood \( P(e_{j, cont} \mid C^w) \) depends on how well constituent \( C^w \) explains the evidence \( e_{j, cont} \), and similarly for $P(e_{j, rand} \mid C^w)$. \( e_{j, cont} \) affects the likelihoods such that \( P(e_{j, cont} \mid C^c) \) is higher because \( C^c \) aligns closely with the specific observations in \( e_{j, cont} \). For incorrect constituents \( C^w \neq C^c \), \( P(e_{j, cont} \mid C^w) \) is lower due to poorer alignment with the detailed evidence as per (\ref{const_dis}). Therefore, the numerator \( P(C^w) P(e_{j, cont} \mid C^w) \) increases, while the denominators adjust based on the altered likelihoods, leading to:

\begin{equation}
    P(C^w \mid e_{j, cont}) \geq P(C^w \mid e_{j, rand}).
\end{equation}

Similarly, the posterior probabilities for incorrect constituents decrease under \( e_{j, cont} \), resulting in a more accurate posterior distribution. From the convergence analysis and Theorem 3, we can observe that the PAC bound depends on \( l \), the number of observations, and the likelihoods. As evidence \( e_{j, cont} \) increases with \( l \), it yields increasingly more discriminative likelihoods (i.e., higher probability concentration around more accurate constituents and the true constituent), tightening the PAC bound. Specifically, with higher \( l \) and more informative likelihoods, the sum in the PAC bound inequality decreases:

\begin{equation}\label{eq:pac_bound_revised}
    \epsilon' \leq \sum_{i=1}^{K - c} \binom{K - c}{i} \left( \frac{P(e \mid C^{c+i}) P(C^{c+i})}{P(e \mid C^c) P(C^c)} \right).
\end{equation}

With \( e_{j, cont} \), the ratio \( \frac{P(e_{j, cont} \mid C^{c+i})}{P(e_{j, cont} \mid C^c)} \) becomes smaller due to the more discriminative likelihoods, leading to a smaller \( \epsilon' \) (tighter error bound) compared to \( e_{j, rand} \).

\end{proof}

Elaborating on Theorem 4, it is observed that evidence \( e_{j, cont} \) affects the posterior in two ways. It provides more specific information and as the evidence is more informative (larger \( l \), larger \( l_w \)) and the likelihood function \( P(e_{j, cont} \mid C^w) \) will often be higher for constituents that align closely with the evidence and lower for others. Also, \( e_{j, cont} \) may render some constituents impossible (assigning them a likelihood of zero) because they are inconsistent with the evidence.

Under \( e_{j, rand} \), the likelihoods \( P(e_{j, rand} \mid C^w) \) are less discriminative because random evidence provides less specific information. The differences in likelihoods between the true constituent and incorrect ones are smaller. As the likelihoods under \( e_{j, cont} \) are more discriminative, the posterior distribution \( P(C^w \mid e_{j, cont}) \) is more peaked around the true constituent \( C^c \), assigning higher probability to \( C^c \) and lower probabilities to others.

In other words, \( e_{j, cont} \) improves discrimination. By providing more specific observations, evidence \( e_{j, cont} \) with high degree of confirmation enhances the differences in likelihoods between the true constituent \( C^c \) and incorrect constituents. The increased likelihood \( P(e_{j, cont} \mid C^c) \) and decreased likelihoods \( P(e_{j, cont} \mid C^w \neq C^c) \) result in a higher posterior probability for \( C^c \). The posterior distribution over constituents becomes more concentrated around \( C^c \), assigning lower probabilities to incorrect constituents and leading to a more accurate posterior. The improved likelihood ratios lead to a tighter PAC bound, reflecting a better estimation of \( C^c \).  In the next section, we comment on the impact of improved posterior estimation on hypothesis deduction task success.

\subsection{Analysis of Semantic Uncertainty and Task Success}

Albeit discussions in this section are for a single node $N_j$, they can be trivially extended to multi node case. Each node \( N_j \) aims to minimize its risk (expected loss) \( R_{\ProbLoc{j}{T} }(h_j) \) over \( \mathcal{B}_j \):

\[
R_{\ProbLoc{j}{T}}(h_j) = E_{h_j \sim \ProbLoc{j}{T} } [L(h_j, h_j^*)],
\]

where \( L(h_j, h_j^*) \) quantifies the discrepancy between \( h_j \) and \( h_j^* \) and can be selected as any appropriate distance metric.

As nodes exchange information and receive informative messages, the distributions \( \ProbLoc{j}{T} \) converge towards \(  \ProbLoc{j*}{T} \) (true distribution), reducing expected loss and enhancing their ability to identify \( h_j^* \), as $h_j \equiv \bigvee_{\substack{w \in \IndexSetConst \\ \text{such that } h_j \models C^w}} C^w$ and $c(h_j, e_j \land \bigcup_{z = 1}^{T} m_{cs, z}) = \sum_w c( C^w, e_j \land \bigcup_{z = 1}^{T} m_{cs, z})$. 

Therefore, as evidence accumulates and as per Theorems 3 \& 4, \( e_{j, cont} \) leads to tighter converge towards the true (minimal) constituent $C^c$, it better minimizes the Bayes risk \(  R_{\ProbLoc{j}{T}}(h_j) \) compared to random selection of messages \( e_{j, rand} \) (e.g., \(  R_{\ProbLoc{j, cont}{T}}(h_j) \leq  R_{\ProbLoc{j, rand}{T}}(h_j) \)), yielding larger hypothesis deduction success rate. In the next section, we experimentally verify these theoretical observations.

\section{Experiment Results}

The effectiveness of DISCD communication algorithm compared to random selection was tested by applying it to a custom deduction dataset. The relevant code and dataset are accessible on GitHub \cite{sih_far}. The custom deduction dataset describes a story where each node $N_j$ has only a portion of it, i.e., possess incomplete information about the story. The story is divided across all nodes, each node receiving equal number of sentences. In order to reflect possible overlaps in the worldviews of different nodes, 30\% of each node's data is shared with at least one other whereas the remaining 70\% is unique to that node. The task in each node is to determine, among 8 distinct hypothesis, the one most supported by the evidence (story) for the entities that are present in the node. However, as the information is incomplete, for some entities, nodes require information from others to be able to deduce the correct hypothesis.

During communication, each node identifies the most informative subset of \( B \) sentences from a total of \( N = 40 \) sentences by solving the optimization problem in (\ref{cont1}). Minimizing (\ref{cont1}) requires calculating inductive logical probabilities, which, in turn necessitates counting the number of states in $\mathcal{S}$ a FOL expression satisfies. As the state space $\mathcal{S}$ is exponentially large, we utilize a state-of-the-art algorithmic boolean model counting tool sharpSAT-td to determine the number of states. To shed more light on how the this optimization works, consider the following example. Let node $N_j$ has sentences $\{FOL_1, ..., FOL_{40}\}$ in its dataset. Assuming $ B=1 $, in round 0, the node calculates the number of states in $\mathcal{S}$ each sentence $FOL_i$ satisfies, and determines their inductive logical probability. Based on those probabilities, $N_j$ optimizes (\ref{cont1}). After transmitting, say $FOL_3$, and receiving another sentence from the server, call it $FOL_{cs, 0}$, it continues with the round 1, where it calculates the probabilities for $FOL_{cs, 0} \land FOL_3 \land FOL_i$ for $i \in {1, 2, 4, ..., 40}$. Then, another $FOL_i$ is selected, transmitted, and so on. For more details on how this optimization algorithm works, see \cite{saz2024, saz2024model}. In our experiments, we tested subsets with sizes \( B = 1, 2\). Once the communication concludes, the nodes perform their respective hypothesis deduction tasks, which is same for all nodes in our experiments. We measure the success rate with the ratio of the population (entities) where the correct hypothesis can be deduced. The results are presented in Fig. \ref{deduc}.

\begin{figure}
    \centering
    \includegraphics[width=\linewidth]{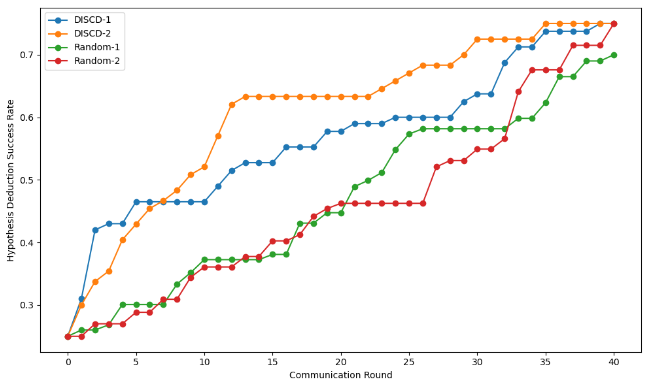}
    \caption{Success rate in hypothesis deduction task vs. communication round}
    \label{deduc}
\end{figure}

DISCD clearly outperforms random selection in hypothesis deduction performance, for both message sizes ($B = 1, 2$). In other terms, DISCD consistently selects sentences that significantly improve hypothesis deduction accuracy compared to random transmissions. Indeed, it only takes 5 rounds of communication for DISCD-1 to achieve 46.5\% success rate whereas it takes 22 rounds of communication for Random-1 to achieve the same performance. Similarly, it takes 13 rounds to achieve 62.1\% success rate for DISCD-2 whereas the same rate of success is achieved for Random-2 only after 34 rounds. After 40 rounds of communication, both DISCD-1 and DISCD-2 achieve 75.0\% performance whereas random-1 can only achieve 70.0\%. This demonstrates that DISCD can achieve similar performance in less number of rounds, saving from communication.

An important clarification shall be made regarding why the performance stalls from time to time. As each hypothesis has multiple premises which may require multiple pieces of essential information to be transmitted, it can take a number of rounds of communication until the performance improves. However, DISCD, thanks to rapid selection of important semantic information regarding SotW, improves accuracy rapidly before stalling, whereas in random selection, the stalling is distributed, and it takes longer to achieve the same level of performance.

Another clarification regards the distinction between DISCD-1 and DISCD-2. In DISCD-2, two FOL sentences are transmitted from nodes to server per round and server broadcasts two sentences back to nodes. This implies what sentences are selected and in which order can be different between the two cases and hence accounts for the different rates of increase in hypothesis deduction performance.  

As the communication machinery, i.e., “compression and channel coding” is common between random method and our method, we did not consider them in comparative results. However, based on the number of sentences transmitted, we studied the average communication costs per node for both schemes for different levels of accuracies in deduction. The results are presented at Table \ref{mytable}.

\begin{table}[htbp]
\centering
\begin{tabular}{|c|c|c|c|c|}
\hline
\textbf{Success Rate} & \textbf{DISCD-1} & \textbf{Random-1} & \textbf{DISCD-2} & \textbf{Random-2} \\ \hline
40\% & 70.9 bits & 402.3 bits & 94.6 bits & 354.9 bits \\ \hline
45\% & 141.9 bits & 496.9 bits & 141.9 bits & 449.6 bits \\ \hline
50\% & 307.6 bits & 544.3 bits & 212.9 bits & 638.9 bits \\ \hline
55\% & 402.3 bits & 591.6 bits & 260.3 bits & 757.3 bits \\ \hline
60\% & 473.3 bits & 828.3 bits & 283.9 bits & 780.9 bits \\ \hline
65\% & 591.6 bits & 851.9 bits & 567.9 bits & 804.6 bits \\ \hline
70\% & 780.9 bits & 946.6 bits & 686.3 bits & 875.6 bits \\ \hline
75\% & 946.6 bits & N/A & 828.3 bits & 946.6 bits \\ \hline
\end{tabular}
\caption{Success Rate vs. Communication Cost (uplink, per node (average))}
\label{mytable}
\end{table}

Based on Table \ref{mytable}, DISCD-1 method, on average, saves approximately \textbf{270.49} bits of communication per node compared to random selection Random-1. Similarly, DISCD-2 method, on average, saves approximately \textbf{316.56} bits of communication compared to random selection Random-2. 

In conclusion, the experiments demonstrate that the proposed semantic communication algorithm DISCD effectively selects and transmits the most informative sentences, achieving substantial success for deductive hypothesis selection task with fewer bits compared to random selection faster. Furthermore, these results validates the theoretical findings regarding convergence, effectiveness and task success. DISCD framework better informs the nodes regarding SotW, leading to better task success.

\bibliographystyle{unsrt} 
\bibliography{refs} 

\end{document}